\documentclass[11pt]{article}
\usepackage[latin1]{inputenc}
\usepackage{graphicx}
\usepackage{epsfig}
\usepackage{color}
\usepackage{a4wide}
\usepackage{multirow}
\usepackage{lscape}
\usepackage{verbatim}
\usepackage{amsthm, amssymb}
\usepackage{amsmath}
\newtheorem{Lem}{Lemma}
\newtheorem{theorem}{Theorem}


\title{Approximate Distance Oracles with Improved Preprocessing Time}
\author{Christian Wulff-Nilsen
        \footnote{Department of Mathematics and Computer Science,
                  University of Southern Denmark,
                  \texttt{koolooz@diku.dk},
                  \texttt{http://www.imada.sdu.dk/$_{\widetilde{~}}$cwn/}.}}

\date{}
\begin{document}

\maketitle
\begin{abstract}
Given an undirected graph $G$ with $m$ edges, $n$ vertices, and non-negative edge weights, and given an integer $k\geq 1$,
we show that for some universal constant $c$, a $(2k-1)$-approximate distance oracle for $G$ of size $O(kn^{1 + 1/k})$ can be constructed in
$O(\sqrt km + kn^{1 + c/\sqrt k})$ time and can
answer queries in $O(k)$ time. We also give an oracle which is faster for smaller $k$. Our results break the quadratic preprocessing
time bound of Baswana and Kavitha for all $k\geq 6$ and improve the $O(kmn^{1/k})$ time bound of Thorup and Zwick except for very
sparse graphs and small $k$. When $m = \Omega(n^{1 + c/\sqrt k})$ and $k = O(1)$, our oracle is optimal w.r.t.\ both
stretch, size, preprocessing time, and query time, assuming a widely believed girth conjecture by Erd\H{o}s.
\end{abstract}
\newpage

\section{Introduction}\label{sec:Intro}
Computing shortest path distances in graphs is a fundamental algorithmic problem and has received a lot of
attention for several decades. Classical algorithms include that of Dijkstra which handles graphs with non-negative edge weights,
and the algorithm of Bellman-Ford which is slower but applies also when negative edge weights are present. A considerable drawback
of these algorithms is that they are too slow for many applications. For instance, a GPS system needs to report shortest path
distances extremely fast in very large road networks. If Dijkstra's algorithm is used, in the worst case the entire network would
need to be explored just to compute a single distance. Another problem is that the whole graph would have to be stored in memory. If
the graph is dense, it might not fit in the main memory and this would slow down computations considerably.

One way to speed up computations is to precompute distances between all pairs of vertices in a preprocessing step and store
them all in a look-up table. Distance queries can then be answered in constant time. The fastest known all-pairs shortest paths
algorithm has only marginally subcubic running time~\cite{Chan}. For sparser graphs, repeated applications of Dijkstra's algorithm
yield $O(mn + n^2\log n)$ time. Pettie~\cite{Pettie} gave a slightly improved bound of $O(mn + n^2\log\log n)$. Even for sparse graphs,
these algorithms are too slow for many applications. Another disadvantage of this scheme is the huge amount of memory required to
store the look-up table for all the distances.

\subsection{Approximate distances}
A way to deal with these issues is to settle for \emph{approximate} shortest path distances. For a directed graph $G$, a distance
from a vertex $u$ to a vertex $v$ along some path in $G$ is of \emph{stretch} $\delta\geq 1$ if the path is at most $\delta$ times
longer than a shortest path from $u$ to $v$ in $G$.

Zwick~\cite{Zwick} showed how to compute all-pairs stretch
$(1 + \epsilon)$-distances in directed graphs in $\tilde O(n^{2.376})$ time for an arbitrarily small constant $\epsilon > 0$.
In the seminal paper of Thorup and Zwick~\cite{ThorupZwick}, it was shown how to preprocess an undirected graph in close to linear
time to build a data structure of near-linear
size which reports small-stretch distances very fast. More precisely, for an undirected graph $G$ with non-negative
edge weights, $m$ edges, and $n$ vertices and for any integer $k\geq 1$, a data structure of size $O(kn^{1 + 1/k})$ can be built in
$O(kmn^{1/k})$ time which gives distances of stretch at most $2k-1$ in $O(k)$ time. Since this data structure has constant query time
for $k = O(1)$, it is referred to as an \emph{approximate distance oracle}. We emphasize that the result only holds for undirected graphs; as
shown by Thorup and Zwick, no compact distance oracles exist in general for directed graphs.

The Thorup-Zwick oracle is randomized. Roditty, Thorup, and Zwick~\cite{DetOracleSpanner} showed how to obtain a deterministic
algorithm with only a polylogarithmic increase in preprocessing time.
Mendel and Naor~\cite{MendelNaor} showed how to improve query time of the Thorup-Zwick oracle to $O(1)$ and space to
$O(n^{1 + 1/k})$ at the cost of a larger stretch of $O(k)$ and a longer preprocessing time of $O(n^{2 + 1/k}\log n)$.

For small $k$, the size/stretch tradeoff of the Thorup-Zwick oracle is essentially optimal due to a (widely believed and partially
proved) girth conjecture of Erd\H{o}s from $1963$~\cite{Erdos}. The only possible improvement is thus in the time for preprocessing.
It was shown by Baswana and Kavitha~\cite{APASP} that an oracle with the same size and stretch can be computed in $O(n^2)$ time which is
an improvement when the number of edges $m$ is $\omega(n^{2 - 1/k}/k)$ and it is the first essentially optimal oracle for
$m = \Theta(n^2)$.

\subsection{Contributions of this paper}
The main contribution of this paper is to break the quadratic time bound of Baswana and Kavitha~\cite{APASP} for $k\geq 6$ and
$m = o(n^2)$ and to break the Thorup-Zwick bound~\cite{ThorupZwick} when the graph is not too sparse and $k$ not too small. We show
that there exists a constant $c$ such that for any integer $k\geq 1$, a $(2k-1)$-approximate distance oracle of size
$O(kn^{1 + 1/k})$ and with query time $O(k)$ can be constructed in $O(\sqrt km + kn^{1 + c/\sqrt k})$ time. When
$m = \Omega(n^{1 + c/\sqrt k})$ and $k = O(1)$, preprocessing time is linear and our construction is thus optimal
in \emph{every} respect (stretch, size, preprocessing time, and query time), assuming the girth conjecture. The
oracle of Baswana and Kavitha only has linear preprocessing for $m = \Theta(n^2)$ and the $O(kmn^{1/k})$
preprocessing of Thorup and Zwick is super-linear for any constant $k$.

We also present an oracle which
is faster for smaller $k$. When $k\geq 3$, $k\bmod 3 = 0$, its preprocessing time is $O(km + kn^{3/2 + 2/k})$. We get similar bounds
for other values of $k\geq 3$: when $k\bmod 3 = 1$, preprocessing is $O(km + n^{3/2 + 1/(2k) + 3/(2(k-1))})$ and when $k\bmod 3 = 2$,
it is $O(km + n^{3/2 + 2/(k-2) - 1/(k(k-2))})$. In particular, we achieve subquadratic preprocessing for all $k\geq 6$.

Our algorithms are very simple to describe and analyze, given previous black boxes. As in earlier approaches, we make use of random
vertex sampling. We apply a result of Baswana and Kavitha~\cite{APASP} to sparsify our graph
w.r.t.\ this sampling while preserving \emph{exact} distances between pairs of vertices that are close in some sense. We construct a
Thorup-Zwick oracle for this sparser subgraph, allowing us to report $(2k-1)$-approximate distances between such vertex pairs.
For pairs that are farther apart, we
make use of spanners. For a $\delta\geq 1$, a \emph{$\delta$-spanner} of a graph $G$ is a subgraph that spans all vertices and
preserves distances between all vertex pairs up to a factor of $\delta$. To construct our oracle with small preprocessing time
for small $k$, we run the linear-time algorithm of Baswana and Sen~\cite{Spanner} to get a spanner of small stretch.
We apply Dijkstra to get exact distances in this spanner between all pairs of sampled vertices. These distances together with
distances from vertices to their nearest sampled vertex in the original graph allows us to report stretch $(2k-1)$-distances also
for vertices far apart. 

Our oracle with near-linear preprocessing time for larger $k$ does not run Dijkstra but instead constructs, on top of a spanner,
what we call a \emph{restricted} oracle. This oracle only allows us to query distances between sampled vertices
but is more compact than the Thorup-Zwick oracle. We pick the stretch of the spanner and the restricted oracle to be
$\Theta(\sqrt k)$ and we show that the oracle gives stretch $(2k-1)$-distances in the underlying graph.

We have summarized previous results on distance oracles as well as our new results in Table~\ref{tab:Oracles}.
\begin{table}
\caption{Performance of distance oracles in weighted undirected graphs.}\label{tab:Oracles}
\begin{center}
\begin{tabular}{|c|c|c|c|c|}
\hline
Stretch        & Query time & Space      & Preprocessing time      & Reference\\
\hline
$1$            & $O(1)$     & $O(n^2)$   & $O(mn + n^2\log\log n)$ & ~\cite{Pettie} \\
\hline
$1 + \epsilon$ & $O(1)$     & $O(n^2)$   & $\tilde O(n^{2.376})$   & ~\cite{Zwick} \\
\hline
\multirow{3}{*}{$2$} & \multirow{2}{*}{$O(1)$} & \multirow{2}{*}{$O(n^2)$}  & $O(n^{3/2}\sqrt m\log n)$   & ~\cite{CohenZwick} \\
                     &                         &                            & $O((m\sqrt n + n^2)\log n)$ & ~\cite{APASP}\\
\cline{2-5}
                  &                 $O(1)$  &              $O(n^{5/3})$  & --                          & ~\cite{Patrascu}\\
\hline
\multirow{2}{*}{$7/3$} & \multirow{2}{*}{$O(1)$} & \multirow{2}{*}{$O(n^2)$} & $O(n^{7/3}\log n)$          & ~\cite{CohenZwick}\\
                       &                         &                           & $O((m^{2/3}n + n^2)\log n)$ & \cite{APASP}\\
\hline
\multirow{4}{*}{$3$} & $O(1)$ & $O(n^2)$      & $O(n^2\log n)$ & ~\cite{CohenZwick}\\
                     & $O(1)$ & $O(n^{3/2})$  & $O(m\sqrt n)$  & ~\cite{ThorupZwick}\\
                     & $O(k)$ & $O(n^{3/2}))$ & $O(\min\{m\sqrt n,kn^{2 + \frac 1{2k}}\})$ & ~\cite{APASP} \\
                     & $O(\sqrt m)$ & $O(m + n)$ & -- & ~\cite{Patrascu}\\
\hline
\multirow{2}{*}{$2k-1$} & \multirow{2}{*}{$O(k)$} & \multirow{2}{*}{$O(kn^{1 + \frac 1 k})$} &
$O(kmn^{1 + \frac 1 k})$ & ~\cite{ThorupZwick}\\
&&& $O(\sqrt k m + kn^{1 + \frac c{\sqrt k}})$ & this paper\\
\hline
$2k-1$    & \multirow{2}{*}{$O(k)$} & \multirow{2}{*}{$O(kn^{1 + \frac 1 k})$} &
\multirow{2}{*}{$O(\min\{n^2,kmn^{1 + \frac 1 k}\})$} & \multirow{2}{*}{~\cite{APASP}}\\
($k\geq 3$)&&&&\\
\hline
$2k-1$    & \multirow{2}{*}{$O(k)$} & \multirow{2}{*}{$O(kn^{1 + \frac 1 k})$} & \multirow{2}{*}{$O(km + kn^{\frac 3 2 + \frac 2 k})$}
          & \multirow{2}{*}{this paper}\\
($k\geq 3$, $k\bmod 3 = 0$)&&&&\\
\hline
$2k-1$    & \multirow{2}{*}{$O(k)$} & \multirow{2}{*}{$O(kn^{1 + \frac 1 k})$} &
\multirow{2}{*}{$O(km + kn^{\frac 3 2 + \frac 1{2k} + \frac 3{2(k-1)}})$} & \multirow{2}{*}{this paper}\\
($k\geq 3$, $k\bmod 3 = 1$)&&&&\\
\hline
$2k-1$    & \multirow{2}{*}{$O(k)$} & \multirow{2}{*}{$O(kn^{1 + \frac 1 k})$} &
\multirow{2}{*}{$O(km + kn^{\frac 3 2 + \frac 2{k-2} - \frac 1{k(k-2)}})$} & \multirow{2}{*}{this paper}\\
($k\geq 3$, $k\bmod 3 = 2$)&&&&\\
\hline
\multirow{2}{*} {$O(k)$} & $O(1)$ & $O(n^{1 + 1/k})$ & $O(n^{2 + 1/k}\log n)$ & ~\cite{MendelNaor}\\
                       & $O(k)$ & $O(kn^{1 + 1/k})$ & $O(km + kn^{1 + 1/k + \epsilon})$ & this paper\\
\hline
\end{tabular}
\end{center}
\end{table}

\subsection{Related work}
A problem related to distance oracles is that of finding spanners. We have already mentioned the linear-time algorithm of
Baswana and Sen~\cite{Spanner} to find a spanner of stretch $2k-1$. There has also been interest in so-called
$(\alpha,\beta)$-spanners,
where $\alpha$ and $\beta$ are real numbers. Such a spanner $H$ of a graph $G$ ensures that for all vertices $u$ and $v$,
$d_H(u,v)\leq\alpha d_G(u,v) + \beta$. In other words, $H$ allows an \emph{additive} stretch in addition to a multiplicative
stretch. Thorup and Zwick~\cite{ThorupZwickSpanner} showed the existence of $(1 + \epsilon,\beta)$-spanners of size $O(n^{1 + 1/k})$
for any constant $k$, where $\beta = (c/\epsilon)^k$ for some constant $c$. A $(1,2)$-spanner of size $\tilde O(n^{3/2})$ was
presented by Dor, Halperin, and Zwick~\cite{DorSpanner}. The size was later improved slightly by Elkin and Peleg to
$O(n^{3/2})$~\cite{ElkinSpanner}. Baswana, Kavitha, Mehlhorn, and Pettie~\cite{SmallAdditiveSpanner} gave a spanner of size
$O(n^{4/3})$ which has additive stretch $6$ and no multiplicative stretch. This is currently the smallest known spanner with
constant additive stretch and no multiplicative stretch.

Allowing additive stretch in oracles has also been considered. For unweighted graphs, Baswana, Gaur, Sen, and
Upadhyay~\cite{SubquadraticOracle} showed how to get subquadratic construction time by allowing constant additive stretch
in addition to a multiplicative stretch of $k$. P\u{a}tra\c{s}cu and Roditty~\cite{Patrascu} showed that for unweighted
graphs there exists an oracle of size $O(n^{5/3})$ which has multiplicative stretch $2$ and additive stretch $1$. Furthermore, they
showed that for weighted graphs with $m = n^2/\alpha$ edges, there is an oracle of size $O(n^2/\alpha^{1/3})$ with multiplicative
stretch $2$ and no additive stretch. Preprocessing time was not considered in~\cite{Patrascu}.

The organization of the paper is as follows. In Section~\ref{sec:Prelim}, we give some basic definitions and notation as well as some
tools that will prove useful. In Section~\ref{sec:Oracle}, we give the stretch $(2k-1)$-oracle which is fast for small $k$. Our
near-linear time oracle is then presented in Section~\ref{sec:NearLinTimeOracle}. Finally,
we make some concluding remarks in Section~\ref{sec:ConclRem}.

\section{Definitions, Notation, and Toolbox}\label{sec:Prelim}
Let $G = (V,E)$ be an undirected graph with non-negative edge weights. For our problem, we may assume that all edges have strictly
positive weight since zero-weight edges can always be contracted. Also, we shall only consider connected graphs. For vertices
$u,v\in V$, we denote by $d_G(u,v)$ the shortest path
distance in $G$ between $u$ and $v$.

For a real value $\delta\geq 1$, a \emph{$\delta$-spanner} of $G$ is a subgraph
$H = (V,E_H)$ of $G$ spanning all its vertices such that for any distinct vertices $u$ and $v$, $d_H(u,v)\leq \delta d_G(u,v)$.

Let $S$ be a non-empty subset of $V$. For a vertex $u$,
let $p_S(u)$ be the vertex of $S$ nearest to $u$ w.r.t.\ $d_G$ (ties are resolved arbitrarily). We denote by $B_S(u)$ the set of
vertices $v$ with $d_G(u,v) < d_G(u,p_S(u))$. Let $E_S(v)$ denote the set of edges incident to $v$ with weight less than
$d_G(v,p_S(v))$. We define $E_S = \cup_{v\in V} E_S(v)$ and $G_S = (V,E_S)$. We need the following two simple results.
\begin{Lem}\label{Lem:NearestSample}
Given an undirected graph $G = (V,E)$ with $m$ edges and $n$ vertices and given a non-empty subset $S$ of $V$,
$p_S(u)$ and $d_G(u,p_S(u))$ can be computed in $O(m + n\log n)$ time over all vertices $u\in V$.
\end{Lem}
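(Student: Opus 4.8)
The plan is to reduce the problem to a single shortest-path computation from an auxiliary source. First I would form a new graph $G'$ from $G$ by adding a single new vertex $s$, together with a zero-weight edge from $s$ to every vertex of $S$. Then for any vertex $u\in V$ we have $d_{G'}(s,u) = \min_{v\in S} d_G(v,u) = d_G(u,p_S(u))$, since any $s$--$u$ path in $G'$ must leave $s$ through one of the zero-weight edges into some $v\in S$ and then follow a path inside $G$, and conversely every such path is realized in $G'$. Hence a single-source shortest-path computation in $G'$ from $s$ yields $d_G(u,p_S(u))$ for every $u$ simultaneously.

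Next I would recover $p_S(u)$ itself. The natural way is to record, for each vertex $u$, which neighbour of $s$ its shortest-path tree branch emanates from: when running Dijkstra from $s$, store for every settled vertex $u$ the identity of the unique $v\in S$ that is the child of $s$ on the tree path $s\rightarrow u$. Concretely, each $v\in S$ (a child of $s$ in the tree) is assigned $p_S(v)=v$, and when Dijkstra relaxes an edge and updates the parent of $u$ to some $w$, we set $p_S(u):=p_S(w)$. This propagates the correct label along tree edges at no asymptotic cost. The tie-breaking is automatically "some nearest vertex of $S$", which is all the statement requires.

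For the running time, $G'$ has $n+1$ vertices and $m+|S|\le m+n$ edges, and all weights are non-negative, so Dijkstra's algorithm with a Fibonacci heap runs in $O((m+n)+n\log n)=O(m+n\log n)$ time; the extra bookkeeping for the $p_S$ labels is $O(1)$ per edge relaxation and per vertex, so it is absorbed. This gives the claimed $O(m+n\log n)$ bound over all $u\in V$ at once.

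The only point that needs a little care — and the closest thing to an obstacle — is arguing that the single super-source computation faithfully computes the per-vertex nearest-sample quantity rather than something weaker; but this is immediate from the observation above that $s$--$u$ paths in $G'$ are exactly ($s$-to-$v$ zero edge) $+$ ($v$-to-$u$ path in $G$) for $v\in S$. Everything else is a routine invocation of Dijkstra.
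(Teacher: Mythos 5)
Your proposal is correct and is essentially the same argument as the paper's: attach a super-source $s$ by zero-weight edges to all of $S$, run Dijkstra with Fibonacci heaps in $O(m+n\log n)$ time, and read off $p_S(u)$ from the shortest-path tree (the paper identifies it as the ancestor of $u$ in $S$, you as a propagated label from the child of $s$ — the same information). Nothing further is needed.
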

\begin{proof}
Connect a new vertex $s$ with a zero-weight edge to each vertex of $S$. Run Dijkstra
(implemented with Fibonacci heaps) from $s$ in this augmented graph. For each vertex $u\in V$, $p_S(u)$ is the unique ancestor of
$u$ belonging to $S$ in the shortest path tree found and the distance from $s$ to $p_S(u)$ in the tree equals $d_G(u,p_S(u))$.
\end{proof}
\begin{Lem}\label{Lem:GS}
Given an undirected graph $G = (V,E)$ with $m$ edges and $n$ vertices and given a non-empty subset $S$ of $V$,
$G_S$ can be computed in $O(m + n\log n)$ time.
\end{Lem}
\begin{proof}
Apply Lemma~\ref{Lem:NearestSample} to identify $p_S(u)$ and $d_G(u,p_S(u))$ for each $u\in V$. Then $E_S(u)$ can be found in time
proportional to the degree of $u$. Hence, $G_S$ can be found in $O(m)$ time in addition to the $O(m + n\log n)$ time from
Lemma~\ref{Lem:NearestSample}.
\end{proof}
The following result is due to Baswana and Kavitha (see Lemmas 2.2 and 2.3 in~\cite{APASP}).
\begin{Lem}\label{Lem:Sparse}
Let $G = (V,E)$ be an undirected $n$-vertex graph with positive edge weights and let $S\subseteq V$, $S\neq\emptyset$.
For any two vertices $u,v\in V$, if $v\in B_S(u)$ then $d_{G_S}(u,v) = d_G(u,v)$. If $S$ is obtained by picking each
vertex independently with probability $p$, then $E_S$ has expected size $O(n/p)$.
\end{Lem}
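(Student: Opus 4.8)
The plan is to prove the two claims separately by short arguments; both rest on controlling the nearest-sample distances $d_G(\cdot,p_S(\cdot))$ along shortest paths.

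For the first claim, fix $u$ and $v$ with $v\in B_S(u)$, i.e.\ $d_G(u,v)<d_G(u,p_S(u))$, and fix a shortest path $P=(u=x_0,x_1,\dots,x_\ell=v)$ in $G$. Since $G_S$ is a subgraph of $G$ we already have $d_{G_S}(u,v)\ge d_G(u,v)$, so it suffices to show that every edge of $P$ lies in $E_S$; then $P$ is a path in $G_S$ of length $d_G(u,v)$, and hence $d_{G_S}(u,v)\le d_G(u,v)$. To see that $x_{i-1}x_i\in E_S$ I would lower-bound $d_G(x_{i-1},p_S(x_{i-1}))$ using optimality of $p_S(u)$ and the triangle inequality: from $d_G(u,p_S(u))\le d_G(u,p_S(x_{i-1}))\le d_G(u,x_{i-1})+d_G(x_{i-1},p_S(x_{i-1}))$ one gets $d_G(x_{i-1},p_S(x_{i-1}))\ge d_G(u,p_S(u))-d_G(u,x_{i-1})$. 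Combining this with $d_G(u,p_S(u))>d_G(u,v)\ge d_G(u,x_i)$ (the last step because $x_i$ lies on a shortest path from $u$ to $v$) and with the fact that $P$ is shortest, so that the weight of $x_{i-1}x_i$ equals $d_G(u,x_i)-d_G(u,x_{i-1})$, shows that this weight is strictly less than $d_G(x_{i-1},p_S(x_{i-1}))$; hence $x_{i-1}x_i\in E_S(x_{i-1})\subseteq E_S$.

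For the second claim I would bound the expected size of $E_S(v)$ for each fixed $v$ and then sum, using $|E_S|\le\sum_{v\in V}|E_S(v)|$. Assuming $G$ simple (parallel edges are irrelevant to distances), list the edges incident to $v$ in nondecreasing weight order as $e_1,e_2,\dots$, with $e_i$ joining $v$ to a neighbour $u_i$, and note that $v\in S$ gives $E_S(v)=\emptyset$. The key observation is that $e_i\in E_S(v)$ forces $u_1,\dots,u_i\notin S$: if $u_j\in S$ for some $j\le i$, then $d_G(v,p_S(v))\le d_G(v,u_j)\le w(e_j)\le w(e_i)$, contradicting $w(e_i)<d_G(v,p_S(v))$. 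Since the $u_i$ are distinct vertices sampled independently with probability $p$, $\Pr[e_i\in E_S(v)]\le(1-p)^i$, so the expected size of $E_S(v)$ is at most $\sum_{i\ge1}(1-p)^i<1/p$, and therefore the expected size of $E_S$ is at most $n/p$, which is $O(n/p)$.

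I do not expect a genuine obstacle. The only points needing care are keeping the inequalities in the first claim strict — which is exactly what the strict inequality in the definition of $B_S(u)$ supplies — and, in the counting argument, treating the degenerate cases cleanly (the argument is cleanest under the harmless assumption that $G$ is simple and when one remembers that $E_S(v)=\emptyset$ whenever $v\in S$).
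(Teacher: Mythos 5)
Your proof is correct: the edge-by-edge argument that every edge of a shortest $u$--$v$ path has weight below the nearest-sample distance of its endpoint (using the strict inequality from $v\in B_S(u)$), and the geometric-series bound $\Pr[e_i\in E_S(v)]\leq(1-p)^i$ after sorting incident edges by weight, are both sound. The paper itself gives no proof of this lemma --- it imports it from Lemmas 2.2 and 2.3 of Baswana and Kavitha --- and your argument is essentially the standard one from that source, so there is nothing to fault beyond the harmless simplicity assumption you already flag.
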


\section{A $(2k-1)$-Approximate Distance Oracle}\label{sec:Oracle}
In this section, we present a $(2k-1)$-approximate distance oracle with subquadratic preprocessing time for $k\geq 6$.
As a warm-up, we first present an $O(k)$ stretch oracle with near-linear preprocessing. It is a trivial combination of the
linear time spanner of Baswana and Sen~\cite{Spanner} and the Thorup-Zwick oracle~\cite{ThorupZwick}. This idea is probably
quite common knowledge and was noted by Sen~\cite{DistOracleSpannerSurvey}. However, the bounds obtained do not appear to be stated
explicitly in the literature so we include them here. Later in this section and in Section~\ref{sec:NearLinTimeOracle}, we
shall refine this idea in order to get optimal tradeoff between size and stretch.
\begin{theorem}\label{Thm:BigStretch}
Let $G$ be an undirected graph with $m$ edges and $n$ vertices and let $\epsilon > 0$ be a constant. For any integer
$k\geq 1$, an $O(k)$-approximate distance oracle for $G$ of size $O(kn^{1 + 1/k})$ can be constructed in
$O(km + kn^{1 + 1/k + \epsilon})$ time and can answer distance queries in $O(k)$ time.
\end{theorem}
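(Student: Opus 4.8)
The plan is to compose two black boxes: the linear-time $(2t-1)$-spanner of Baswana and Sen, and the Thorup--Zwick oracle, choosing the spanner stretch $2t-1$ and the oracle parameter $\ell$ so that the product of stretches is $O(k)$ while both the size and the preprocessing time land in the claimed bounds. First I would invoke Baswana--Sen on $G$ with a parameter $t$ to be fixed, producing in $O(tm) = O(km)$ time a $(2t-1)$-spanner $H$ of $G$ with $O(tn^{1+1/t})$ edges. Then I would run the Thorup--Zwick construction on $H$ with parameter $\ell$, obtaining in $O(\ell |E(H)| n^{1/\ell})$ time an oracle of size $O(\ell n^{1+1/\ell})$ answering stretch-$(2\ell-1)$ queries in $H$ in $O(\ell)$ time. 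Since for any $u,v$ we have $d_H(u,v)\le (2t-1)d_G(u,v)$ and $d_G(u,v)\le d_H(u,v)$, a query to the composed structure returns a value between $d_G(u,v)$ and $(2\ell-1)(2t-1)d_G(u,v)$, so the stretch is $(2\ell-1)(2t-1)$, which is $O(k)$ provided $(2\ell-1)(2t-1)=O(k)$.

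The parameter choice is the only real decision. To keep the final size $O(kn^{1+1/k})$ I want $1/\ell \le 1/k$, i.e.\ $\ell \ge k$, so take $\ell = k$; then the oracle has size $O(kn^{1+1/k})$ and query time $O(k)$ as required. To keep the stretch $O(k)$ with $\ell = k$ I need $2t-1 = O(1)$, so take $t$ to be a constant; then the spanner has $O(n^{1+1/t})$ edges. Now the bottleneck is the Thorup--Zwick preprocessing time $O(k|E(H)|n^{1/k}) = O(kn^{1+1/t}n^{1/k}) = O(kn^{1+1/t+1/k})$. Choosing $t = \lceil 2/\epsilon\rceil$ makes $1/t \le \epsilon/2$, and since $k\ge 1$ forces $1/k\le 1$ we get a bound of $O(kn^{1+\epsilon/2+1/k})$; absorbing the spanner cost $O(km)$ and noting $n^{1/k}$ is already accounted for, this is $O(km + kn^{1+1/k+\epsilon/2}) \subseteq O(km + kn^{1+1/k+\epsilon})$. (One can also simply write the spanner stretch as $2t-1$ with $t=\Theta(1/\epsilon)$ and absorb the resulting $O(1/\epsilon)=O(1)$ stretch factor into the $O(k)$.)

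The main obstacle, such as it is, is purely bookkeeping: verifying that the exponent $1+1/t+1/k$ is dominated by $1+1/k+\epsilon$ for the chosen constant $t$, and that the additive $O(km)$ term (spanner construction plus reading the input) does not interfere. There is nothing deep here — correctness of the stretch is immediate from the two sandwich inequalities $d_G\le d_H\le (2t-1)d_G$, and the size and query-time bounds are inherited verbatim from Thorup--Zwick with $\ell=k$. I would close by remarking that this construction is exactly the ``warm-up'' referred to in the introduction, and that the more delicate oracles of the next sections replace the crude step ``build a full Thorup--Zwick oracle on the spanner'' by a sampling-plus-sparsification argument (Lemmas~\ref{Lem:NearestSample}--\ref{Lem:Sparse}) that recovers the optimal stretch $2k-1$ rather than $O(k)$.
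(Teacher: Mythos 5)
Your proposal is correct and is essentially identical to the paper's proof: both build a constant-stretch Baswana--Sen spanner with $O(n^{1+O(\epsilon)})$ edges and then run the Thorup--Zwick construction with parameter $k$ on that spanner, multiplying the two stretches to get $O(k)$. The only differences are cosmetic (your $t=\lceil 2/\epsilon\rceil$ versus the paper's $\lceil 1/\epsilon\rceil$, and your slightly more careful statement of the spanner stretch as $2t-1$).
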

\begin{proof}
We compute in $O(km)$ time a spanner $H$ of $G$ with stretch at most $\lceil 1/\epsilon\rceil = O(1)$ and with
$m_H = O(n^{1 + 1/\lceil 1/\epsilon\rceil}) = O(n^{1 + \epsilon})$ edges using the linear time
algorithm of Baswana and Sen~\cite{Spanner}.
We then construct a Thorup-Zwick oracle of stretch $2k-1$ on top of $H$. It has size
$O(kn^{1 + 1/k})$, query time $O(k)$, and preprocessing time $O(km_Hn^{1/k}) = O(kn^{1 + 1/k + \epsilon})$.
Since it has stretch $O(k)$ and $H$ has stretch $O(1)$, the theorem follows.
\end{proof}

\subsection{Preprocessing}\label{subsec:Preproc}
We now present our $(2k-1)$-approximate distance oracle and start with the preprocessing step.
Each vertex is sampled with probability $p = n^{-i/k}$, for some $0 < i\leq k$ to be specified; we allow $i$ to
be a non-integer. Let $S$ be the set of sampled vertices. We construct $G_S$ in $O(m + n\log n)$ time using Lemma~\ref{Lem:GS}.
The expected size of $S$ is $pn = n^{1 - i/k}$ and by Lemma~\ref{Lem:Sparse}, the expected size of $E_S$ is
$O(n/p) = O(n^{1 + i/k})$. We can rerun the sampling until $|S| = \Theta(n^{1 - i/k})$ and $|E_S| = O(n^{1 + i/k})$; by Markov's
inequality, only a constant expected number of reruns is needed for this.

We compute and store both $p_S(u)$ and $d_G(u,p_S(u))$ for all $u\in V$.
By Lemma~\ref{Lem:NearestSample}, this can be done in $O(m + n\log n)$ time.

Next, we build the distance oracle of Thorup and Zwick for the graph $G_S = (V,E_S)$. This takes
$O(k|E_S|n^{1/k}) = O(kn^{1 + (i+1)/k})$ expected time and requires $O(kn^{1 + 1/k})$ space. For some integer $k'$ (to be
specified), we apply the $O(k'm)$ time algorithm of Baswana and Sen~\cite{Spanner} to find a $(2k'-1)$-spanner $H = (V,E_H)$
of $G$ with $|E_H| = O(k'n^{1 + 1/k'})$ edges. For each pair of sampled
vertices $p,q\in S$, we compute and store $d_H(p,q)$. The latter is done by running Dijkstra in $H$ from each sampled vertex. Implementing Dijkstra with Fibonacci heaps, this takes
a total of $O(|S|(|E_H| + n\log n)) = O(k'n^{2 + 1/k' - i/k})$ time. The space required to store
all the $S\times S$ distances is $O(|S|^2) = O(n^{2 - 2i/k})$.

It follows from the above that total expected preprocessing time is $O(k'm + kn^{1 + (i+1)/k} + k'n^{2 + 1/k' - i/k})$ and the
amount of space needed for our oracle is $O(kn^{1 + 1/k} + n^{2 - 2i/k})$.

\subsection{Answering a distance query}
Now, let us consider how to answer a distance query for vertices $u,v\in V$, given the above preprocessing.
In constant time, we look up vertices $p_S(u)$ and $p_S(v)$ as well as distances $d_G(u,p_S(u))$ and $d_G(v,p_S(v))$.
We first query the distance oracle associated with $G_S$ and get a distance estimate $\tilde d_1(u,v)$ in $O(k)$ time. 
We then obtain the precomputed value $d_H(p_S(u),p_S(v))$ in constant time and obtain another
distance estimate $\tilde d_2(u,v) = d_G(u,p_S(u)) + d_H(p_S(u),p_S(v)) + d_G(v,p_S(v))$. The smallest of $\tilde d_1(u,v)$ and
$\tilde d_2(u,v)$ is then output as the answer to the query.

\subsection{Bounding stretch}\label{subsec:StretchBound}
Let $\tilde d_G(u,v) = \min\{\tilde d_1(u,v),\tilde d_2(u,v)\}$ denote the distance estimate that our query step outputs.
We show in the following that for a suitable choice of $k'$,
$d_G(u,v)\leq \tilde d_G(u,v)\leq (2k-1)d_G(u,v)$. The first inequality is clear since both $\tilde d_1(u,v)$ and
$\tilde d_2(u,v)$ are the weights of some $u$-$v$-paths in $G$; in particular, they are both at least as long as a shortest
$u$-$v$-path. In the following, we show $\tilde d_G(u,v)\leq (2k-1)d_G(u,v)$.

If $u\in B_S(v)$ or $v\in B_S(u)$ then by Lemma~\ref{Lem:Sparse}, $d_{G_S}(u,v) = d_G(u,v)$.
In this case, the oracle for $G_S$ outputs $\tilde d_1(u,v)\leq (2k - 1)d_G(u,v)$ so our query algorithm will output
$\tilde d_G(u,v)\leq\tilde d_1(u,v)\leq (2k-1)d_G(u,v)$, as desired.

Now assume that $u\notin B_S(v)$ and $v\notin B_S(u)$. Then $d_G(u,p_S(u)),d_G(v,p_S(v))\leq d_G(u,v)$ and hence
\begin{align*}
d_H(p_S(u),p_S(v)) & \leq (2k' - 1)d_G(p_S(u),p_S(v))\\
                   & \leq (2k' - 1)(d_G(p_S(u),u) + d_G(u,v) + d_G(v,p_S(v)))\\
                   & \leq (6k' - 3)d_G(u,v),
\end{align*}
so $\tilde d_2(u,v) = d_G(u,p_S(u)) + d_H(p_S(u),p_S(v)) + d_G(v,p_S(v)) \leq (6k' - 1)d_G(u,v)$.
Hence, if we choose $k' = \lfloor k/3\rfloor$, our query algorithm gives the desired stretch, i.e.,
$\tilde d_G(u,v)\leq\tilde d_2(u,v)\leq (2k-1)d_G(u,v)$. Here we need to assume that $k\geq 3$ since $k'$ needs to be at least $1$.

\subsection{Running time}
Now, let us bound running time. Our query step runs in $O(k)$ time as mentioned above. With our choice of $k'$,
preprocessing takes $O(km + kn^{1 + (i+1)/k} + n^{2 + 1/\lfloor k/3\rfloor - i/k})$ time. We consider three cases,
$k\bmod 3 = 0$, $k\bmod 3 = 1$, and $k\bmod 3 = 2$, and we will fix a value of $i$ in each of them that minimizes preprocessing
time. In the first case, $k' = k/3$
and we set $i = k/2 + 1$, in the second case $k' = (k-1)/3$ and we set $i = (k-1)/2 + 3k/(2(k-1))$, and in the
third case $k' = (k-2)/3$ and we set $i = (k-2)/2 + (2k-1)/(k-2)$. The time and space bounds we obtain from these choices
are stated in the following theorem.
\begin{theorem}\label{Thm:SmallStretch}
Let $G$ be an undirected graph with $m$ edges and $n$ vertices and let $k\geq 3$ be an integer. If $k\bmod 3 = 0$,
a $(2k-1)$-approximate distance oracle for $G$ of size $O(kn^{1 + 1/k})$ can be constructed in $O(km + kn^{3/2 + 2/k})$ time and can
answer queries in $O(k)$ time. If $k\bmod 3 = 1$ resp.\ $k\bmod 3 = 2$, the same size and query bounds hold
and construction time is $O(km + kn^{3/2 + 1/(2k) + 3/(2(k-1))})$ resp.\ $O(km + kn^{3/2 + 2/(k-2) - 1/(k(k-2))})$.
\end{theorem}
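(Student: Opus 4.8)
The plan is to obtain Theorem~\ref{Thm:SmallStretch} almost entirely from the analysis already carried out in Sections~\ref{subsec:Preproc}--\ref{subsec:StretchBound}: the construction, the query procedure, and the proof that the returned estimate has stretch $2k-1$ are all in place once we fix $k' = \lfloor k/3\rfloor$, which is legitimate precisely because $k\geq 3$ forces $k'\geq 1$. What is left is to pin down the free parameter $i$ and to check that the resulting time and space bounds are the claimed ones. Concretely, I would first record that for $k' = \lfloor k/3\rfloor$ the expected preprocessing time is $O(km + kn^{1 + (i+1)/k} + n^{2 + 1/\lfloor k/3\rfloor - i/k})$, the space is $O(kn^{1 + 1/k} + n^{2 - 2i/k})$, and the query time is $O(k)$, all of which are immediate from the earlier subsections.

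The heart of the argument is an optimization over $i$. In the time bound the term $n^{1 + (i+1)/k}$ is increasing in $i$ while $n^{2 + 1/\lfloor k/3\rfloor - i/k}$ is decreasing in $i$, so the minimum of their maximum is attained (up to the $\poly$-logarithmic and $k$ factors) when the two exponents coincide. I would split into the three residue classes, in which $\lfloor k/3\rfloor$ equals $k/3$, $(k-1)/3$, and $(k-2)/3$, and solve the linear equation $1 + (i+1)/k = 2 + 1/\lfloor k/3\rfloor - i/k$ in each. This produces exactly $i = k/2 + 1$ for $k\bmod 3 = 0$, $i = (k-1)/2 + 3k/(2(k-1))$ for $k\bmod 3 = 1$, and $i = (k-2)/2 + (2k-1)/(k-2)$ for $k\bmod 3 = 2$. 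Substituting each value back into the shared exponent gives $3/2 + 2/k$, $3/2 + 1/(2k) + 3/(2(k-1))$, and $3/2 + 2/(k-2) - 1/(k(k-2))$ respectively, which yields the three time bounds; this is pure exponent bookkeeping (for instance $\tfrac{2k-1}{k(k-2)} = \tfrac{2}{k-2} - \tfrac{1}{k(k-2)}$), and I expect it to be the most tedious but least subtle part.

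Finally I would discharge two feasibility checks. First, the sampling probability $p = n^{-i/k}$ must be meaningful, i.e.\ $0 < i \leq k$; a one-line computation confirms this for each of the three choices once $k\geq 3$ (and $i$ being non-integer is harmless, since $p$ enters only as a sampling probability). Second, the space bound must collapse to $O(kn^{1 + 1/k})$, which requires $n^{2 - 2i/k} = O(n^{1 + 1/k})$, equivalently $i \geq (k-1)/2$; in all three cases $i$ has the form $(k-c)/2$ with $c\in\{0,1,2\}$ plus a strictly positive quantity that is at least $\tfrac12$, so the inequality holds. I do not anticipate a genuine obstacle: every conceptual ingredient — the vertex sampling, the sparsifier $G_S$ of Lemma~\ref{Lem:Sparse}, the auxiliary $(2k'-1)$-spanner $H$, the two estimates $\tilde d_1$ and $\tilde d_2$, and the stretch analysis — is already settled, so Theorem~\ref{Thm:SmallStretch} is essentially what results from plugging the optimal $i$ into the bounds of Section~\ref{subsec:Preproc}.
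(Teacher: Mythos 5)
Your proposal is correct and follows essentially the same route as the paper: it reuses the construction, query procedure, and stretch analysis of Section~\ref{sec:Oracle} with $k' = \lfloor k/3\rfloor$, and then balances the exponents $1 + (i+1)/k$ and $2 + 1/k' - i/k$ in each residue class, which yields exactly the paper's choices of $i$ (your $i=(k-1)/2+3k/(2(k-2))$ equals the paper's $(k-2)/2+(2k-1)/(k-2)$) and the stated time bounds. Your additional checks that $0 < i\leq k$ and that $n^{2-2i/k}=O(n^{1+1/k})$ are exactly the feasibility points the paper leaves implicit, so nothing is missing.
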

We see that Theorem~\ref{Thm:SmallStretch} breaks the quadratic time bound of Baswana and Kavitha~\cite{APASP}
when $k\geq 6$.

\section{Near-Linear Time Oracle}\label{sec:NearLinTimeOracle}
In this section, we give our $(2k-1)$-approximate distance
oracle that breaks the preprocessing time of the previous section for larger $k$. It achieves
a time bound arbitrarily close to linear when $k$ is sufficiently large.
We shall modify the oracle of Section~\ref{sec:Oracle} and the modification we make is in the distance computations in the spanner
$H$ between sampled vertices in $S$. Instead of Dijkstra, we build an approximate distance oracle on top of $H$ and then use it
to report distances. However, we cannot apply this idea directly since we aim for $O(kn^{1 + 1/k})$ space; if we built the
Thorup-Zwick oracle of that size on top of $H$, it would give stretch $2k-1$ which for our oracle would be multiplied
with the stretch of $H$. Instead, we will use the fact that the oracle only needs to report distances between vertices
of $S$, allowing us to improve the Thorup-Zwick space/stretch tradeoff.

\paragraph{Overview of the Thorup-Zwick oracle:}
First, let us very briefly go through the algorithm of Thorup and Zwick. For an integer $\kappa\geq 1$, to build a size
$O(\kappa n^{1 + 1/\kappa})$ oracle with stretch $2\kappa - 1$, sets $A_0,\ldots,A_\kappa$ are formed, with
$V = A_0\supseteq A_1\supseteq A_2\ldots\supseteq A_\kappa = \emptyset$. For $i = 1,\ldots,\kappa-1$, set $A_i$ is formed by
picking each element of $A_{i-1}$ independently with probability $n^{-1/\kappa}$. For $i = 0,\ldots,\kappa - 1$, distance
$d_G(A_i,v) = \min_{w\in A_i} d_G(w,v)$ is computed for each vertex $v$ and a vertex $p_i(v)$ of $A_i$ achieving this distance is kept.
Then so called \emph{bunches} $B(v)$ are formed around each vertex $v$ and are defined by:
\[
  B(v) = \cup_{i = 0}^{\kappa - 1}\{w\in A_i\setminus A_{i+1} | d_G(w,v) < d_G(A_{i+1},v)\}.
\]

The oracle answers a query for the approximate distance between vertices $u$ and $v$ by repeatedly checking whether suitably
chosen vertices belong to bunch $B(u)$ or bunch $B(v)$. More precisely, it starts by setting $w := u$ and initializes a counter
$i := 0$. Then as long as $w\notin B(v)$, it increments $i$, swaps $u$ and $v$ and updates $w := p_i(u)$. When $w\in B(v)$ (which
will happen at some point) $d_G(w,u) + d_G(w,v)$ is returned as approximate distance.

\paragraph{Improving space:}
The space requirement of the Thorup-Zwick oracle is dominated by the size of the bunches $B(v)$. For our application, we consider
a \emph{restricted} version of this oracle where we only care about queries between pairs of vertices from set $S$. It follows
from the query step of the Thorup-Zwick oracle that
we only need to store bunches $B(v)$ for $v\in S$ to answer such queries.
As shown by Thorup and Zwick, each bunch has size $O(\kappa n^{1/\kappa})$.
Hence, to build our restricted oracle, we only need $O(\kappa|S|n^{1/\kappa})$ space.\footnote{In fact, using the source-restricted
distance oracle in~\cite{DetOracleSpanner}, we can reduce space further to $O(\kappa|S|^{1 + 1/\kappa})$ by keeping $O(|S|)$ bunches each of size $O(\kappa|S|^{1/\kappa})$. This also gives a slightly improved
preprocessing time. Unfortunately, $|S|$ is too big for this result to give any significant improvement of our oracle.}

\paragraph{Our oracle:}
We get our near-linear time oracle by combining ideas from Section~\ref{sec:Oracle} with the restricted oracle above.
As before, set $S$ is formed by sampling each vertex with probability $p = n^{-i/k}$ and we construct the Thorup-Zwick oracle for
graph $G_S = (V,E_S)$. We build a $(2k'-1)$-spanner $H = (V,E_H)$ with $O(k'n^{1 + 1/k'})$ edges. Now, instead of applying Dijkstra
to find the exact distance $d_H(p,q)$ in $H$ between each pair of samples $p,q\in S$, we build a restricted $(2\kappa - 1)$-stretch
oracle for $H$ w.r.t.\ set $S$. It is built with the same procedure as that of Thorup and Zwick but only bunches $B(v)$ for
$v\in S$ are kept. Total preprocessing for the oracles for $G_S$ and $H$ is then
$O(k'm + kn^{1 + (i+1)/k} + \kappa k'n^{1 + 1/k' + 1/\kappa})$. We shall specify $i$, $k'$, and $\kappa$ below.

To answer a distance query for vertices $u$ and $v$, we do as in Section~\ref{sec:Oracle} but instead of using the exact spanner
distance $d_H(p_S(u),p_S(v))$ we use that obtained by the restricted oracle.

The analysis for bounding the stretch for such a query is almost identical to that in
Section~\ref{subsec:StretchBound}. We only need to consider the case $u\notin B_S(v)$ and $v\notin B_S(u)$ and we get a stretch
bounded by $2 + 3(2k' - 1)(2\kappa - 1)$ since we apply a $(2\kappa-1)$-approximate distance oracle instead of Dijkstra.

The size of our new oracle is bounded by the size of the oracle for $G_S$ and the restricted oracle for $H$ (note that we
do not need to store $H$ after the restricted oracle has been built). The first oracle requires
$O(kn^{1 + 1/k})$ space and as we saw above, the restricted oracle requires
$O(|S|\kappa n^{1/\kappa}) = O(\kappa n^{1 - i/k + 1/\kappa})$ space.
Since we only allow space $O(kn^{1 + 1/k})$, we need $\kappa\geq k/(i+1)$. To minimize stretch, we pick $\kappa$ as small as
possible while satisfying this inequality, i.e., we pick $\kappa = \lceil k/(i+1)\rceil$. Substituting this for $\kappa$ and
requiring that stretch is at most $2k-1$, we get the inequality
\[
  2 + 3(2k' - 1)\left(2\left\lceil\frac k{i+1}\right\rceil - 1\right)\leq 2k-1\Leftrightarrow
  k'\leq\frac{k + 3\left(\left\lceil\frac k{i+1}\right\rceil - 1\right)}{6\left\lceil\frac k{i+1}\right\rceil - 3}.
\]
To minimize running time, we pick $k'$ as large as possible while satisfying this inequality, i.e., we set
$k' = \lfloor\frac{k + 3(\lceil\frac k{i+1}\rceil - 1)}{6\lceil\frac k{i+1}\rceil - 3}\rfloor$ (for this to make sense, we need
$k'\geq 1$ but we shall choose $i$ and $k$ so that this is ensured).

We are aiming for a preprocessing time of $O(k'm + kn^{1 + c/\sqrt n})$ for some constant $c > 0$.
We shall pick $c$ such that an $i$ (possibly non-integer) can be found satisfying
\begin{align}
  \frac {2\sqrt k}{c} \leq \frac k{i+1}\leq\left\lceil\frac k{i+1}\right\rceil\leq\frac {c\sqrt k}{18}.\label{ineq}
\end{align}
To ensure this, it suffices to require that $2\sqrt k/c + 1 \leq c\sqrt k/18$.
Multiplying by $c$ on both sides, we get
\[
  \frac{\sqrt k}{18}c^2 - c - 2\sqrt k \geq 0\Leftrightarrow c \geq \frac 9{\sqrt k} + 9\sqrt{\frac 1 k + \frac 4 9}.
\]
Hence, since $k\geq 1$, picking $c = 9 + 3\sqrt{13}$ will allow us to pick an $i$ satisfying inequalities~(\ref{ineq}).
We claim that this choice of $i$ gives the desired preprocessing time bound. Assume that $c/\sqrt k\leq 1$ since we are
only interested in subquadratic bounds. Then
\[
  k' = \left\lfloor\frac{k + 3\left(\left\lceil\frac k{i+1}\right\rceil - 1\right)}{6\left\lceil\frac k{i+1}\right\rceil - 3}
       \right\rfloor
     \geq \frac {k + 3\left(\left\lceil\frac k{i+1}\right\rceil - 1\right)}{\frac{c\sqrt k}3 - 3} - 1
     >    \frac k{\frac{c\sqrt k}3} - 1
     =    \frac{3\sqrt k - c}{c}
     \geq \frac {2\sqrt k}{c}.
\]
Since also $\kappa = \lceil k/(i+1)\rceil \geq k/(i+1)\geq 2\sqrt k/c$ and $\kappa k' = O(k)$, total preprocessing time is
\[
  O(k'm + kn^{1 + (i+1)/k} + \kappa k'n^{1 + 1/k' + 1/\kappa}) =
  O(\sqrt k m + kn^{1 + c/(2\sqrt k)} + kn^{1 + c/\sqrt k}) =
  O(\sqrt k m + kn^{1 + c/\sqrt k}),
\]
as desired.
We can now state the main result of the paper.
\begin{theorem}\label{Thm:NearLinTimeOracle}
Let $G$ be an undirected graph with $m$ edges and $n$ vertices and let $k\geq 1$ be an integer. Then a
$(2k-1)$-approximate distance oracle for $G$ of size $O(kn^{1 + 1/k})$ can be constructed in
$O(\sqrt km + kn^{1 + c/\sqrt k})$ time, for some constant $c$, and can answer distance queries in $O(k)$ time.
\end{theorem}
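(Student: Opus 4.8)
The plan is to assemble Theorem~\ref{Thm:NearLinTimeOracle} exactly from the pieces the excerpt has already built, so the proof is essentially a bookkeeping argument. First I would recall the construction: sample each vertex with probability $p = n^{-i/k}$ to get $S$ with $|S| = \Theta(n^{1-i/k})$ and $|E_S| = O(n^{1+i/k})$ (rerunning a constant expected number of times by Markov, as in Section~\ref{subsec:Preproc}); build the Thorup-Zwick oracle on $G_S$ in $O(kn^{1+(i+1)/k})$ time and $O(kn^{1+1/k})$ space; compute $p_S(u)$ and $d_G(u,p_S(u))$ for all $u$ via Lemma~\ref{Lem:NearestSample}; build a $(2k'-1)$-spanner $H$ with $O(k'n^{1+1/k'})$ edges using Baswana--Sen in $O(k'm)$ time; and finally build the restricted $(2\kappa-1)$-stretch Thorup--Zwick oracle on $H$ (keeping only bunches $B(v)$ for $v\in S$), in $O(\kappa k' n^{1+1/k'+1/\kappa})$ time and $O(\kappa n^{1-i/k+1/\kappa})$ space. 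The query algorithm is the one from Section~\ref{sec:Oracle} with the restricted oracle on $H$ substituting for the exact Dijkstra distances, so it runs in $O(k)$ time.

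Next I would handle correctness of the stretch. Following Section~\ref{subsec:StretchBound} verbatim: if $u\in B_S(v)$ or $v\in B_S(u)$ then by Lemma~\ref{Lem:Sparse} $d_{G_S}(u,v)=d_G(u,v)$, so the $G_S$-oracle already returns $\tilde d_1(u,v)\le(2k-1)d_G(u,v)$. Otherwise $d_G(u,p_S(u)),d_G(v,p_S(v))\le d_G(u,v)$, and chaining the spanner stretch $2k'-1$ with the restricted-oracle stretch $2\kappa-1$ through the triangle inequality gives $\tilde d_2(u,v)\le\bigl(2+3(2k'-1)(2\kappa-1)\bigr)d_G(u,v)$. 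So it suffices to choose the parameters so that $2+3(2k'-1)(2\kappa-1)\le 2k-1$, together with $\kappa\ge k/(i+1)$ (to keep the restricted-oracle space within $O(kn^{1+1/k})$) and $k'\ge 1$.

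Then comes the parameter selection, which is the only genuinely delicate part. Set $\kappa=\lceil k/(i+1)\rceil$ and $k'=\bigl\lfloor\frac{k+3(\lceil k/(i+1)\rceil-1)}{6\lceil k/(i+1)\rceil-3}\bigr\rfloor$, the largest integer meeting the stretch inequality. The point is to pick the constant $c$ and a (possibly non-integer) $i$ so that $\frac{2\sqrt k}{c}\le\frac{k}{i+1}\le\lceil\frac{k}{i+1}\rceil\le\frac{c\sqrt k}{18}$; this is possible whenever $2\sqrt k/c+1\le c\sqrt k/18$, i.e.\ $\frac{\sqrt k}{18}c^2-c-2\sqrt k\ge0$, which for all $k\ge1$ is satisfied by $c=9+3\sqrt{13}$. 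From the left inequality one gets $k'>3\sqrt k/(c\sqrt k)-1=(3\sqrt k-c)/c\ge 2\sqrt k/c\ge 1$ (using $c/\sqrt k\le1$, legitimate since we only care about subquadratic bounds), so $k'$ is a valid integer $\ge1$; and $\kappa k'=O(\sqrt k\cdot\sqrt k)=O(k)$. Plugging $\kappa,k'$ into the preprocessing expression, $O(k'm+kn^{1+(i+1)/k}+\kappa k'n^{1+1/k'+1/\kappa})$ becomes $O(\sqrt k m+kn^{1+c/(2\sqrt k)}+kn^{1+c/\sqrt k})=O(\sqrt k m+kn^{1+c/\sqrt k})$. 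The $O(m+n\log n)$ terms from Lemmas~\ref{Lem:NearestSample} and \ref{Lem:GS} are absorbed. Space is $O(kn^{1+1/k})$ by the choice $\kappa\ge k/(i+1)$, and query time is $O(k)$.

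The main obstacle, as flagged, is the parameter juggling: one must verify simultaneously that $k'\ge1$ is an honest integer, that the ceiling/floor roundings in $\kappa$ and $k'$ do not break the stretch inequality $2+3(2k'-1)(2\kappa-1)\le 2k-1$, and that $\kappa k'=O(k)$ so the $n^{1+1/k'+1/\kappa}$ term does not pick up an extra $\sqrt k$ factor. Everything else is a direct citation of earlier lemmas and the already-established construction and query procedures. I would note at the end that the constant $c=9+3\sqrt{13}$ is not optimized and that the small-$k$ regime (where $c/\sqrt k>1$) is uninteresting because Theorem~\ref{Thm:SmallStretch} already gives better bounds there, which is why assuming $c/\sqrt k\le1$ loses nothing.
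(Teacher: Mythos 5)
Your proposal is correct and follows essentially the same route as the paper's own proof: the same construction (restricted $(2\kappa-1)$-oracle on the spanner $H$ in place of Dijkstra), the same parameter choices $\kappa=\lceil k/(i+1)\rceil$, $k'=\lfloor(k+3(\kappa-1))/(6\kappa-3)\rfloor$, the same constant $c=9+3\sqrt{13}$ from the inequality $2\sqrt k/c+1\le c\sqrt k/18$, and the same assumption $c/\sqrt k\le1$ to dispose of small $k$. Apart from a harmless typo in the lower bound on $k'$ (it should read $k/(c\sqrt k/3)-1=(3\sqrt k-c)/c$), the bookkeeping matches the paper's argument step for step.
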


The bound we gave on constant $c$ above was not very tight. In the following, let us bound this constant
for large $k$. Instead of picking $i$ such that inequalities~(\ref{ineq}) are satisfied, pick it such that
$\kappa = \lceil k/(i+1)\rceil \geq k/(i + 1) = \sqrt {k/6}$. Then
\[
k' =    \left\lfloor\frac{k + 3\left(\left\lceil\frac k{i+1}\right\rceil - 1\right)}
                         {6\left\lceil\frac k{i+1}\right\rceil - 3}\right\rfloor
   \geq \left\lfloor\frac{k + 3\sqrt\frac{k}{6} - 3}{6\left\lceil\sqrt\frac{k}{6}\right\rceil - 3}\right\rfloor
   \geq \frac{k + 3\sqrt\frac{k}{6} - 3}{6\sqrt\frac{k}{6} + 3} - 1
   =    \frac{k - 3\sqrt\frac{k}{6} - 6}{\sqrt{6k} + 3},
\]
which is at least $1$ when $k\geq 29$. Hence
\[
  \frac 1{\kappa} + \frac 1{k'}\leq \sqrt\frac{6}{k} + \frac{\sqrt{6k} + 3}{k - 3\sqrt\frac{k}{6} - 6}
\]
and it follows that we can pick $c$ arbitrarily close to $2\sqrt 6$ if $k$ is bounded from below by a sufficiently large
constant.

\section{Concluding Remarks}\label{sec:ConclRem}
For an undirected graph $G$ with $m$ edges, $n$ vertices, and non-negative edge weights, and for an integer $k\geq 1$, the main
result of this paper is a $(2k-1)$-approximate distance oracle of $G$ having size $O(kn^{1 + 1/k})$ and $O(k)$ query time which
can be constructed in $O(\sqrt k m + kn^{1 + c/\sqrt k})$ time for some constant $c$. We also gave an oracle
with faster preprocessing for smaller $k$. Together, these two results break the quadratic preprocessing
time of Baswana and Kavitha for $k\geq 6$ and improve the $O(kmn^{1/k})$ bound of Thorup and Zwick when $G$ is not too sparse and
$k$ not too small.

Assuming the girth conjecture, our oracle is optimal in every respect when $m = \Omega(n^{1 + c/\sqrt k})$ and
$k = O(1)$ since then preprocessing time is linear. Whether linear preprocessing time is achievable also for
sparser graphs remains an open problem. Our oracles break the quadratic preprocessing bound when $k \geq 6$.
What is possible for smaller $k$? Finally, we believe that by using existing techniques, it should be possible to
derandomize our oracles at only a small increase in preprocessing time.

\end{document}